\documentclass[a4paper,11pt]{article}

\usepackage[top=1 in,bottom=1 in,left=1.0 in,right=1.0 in]{geometry}
\usepackage{graphics,amsmath,amssymb,amsthm,accents}
\usepackage{appendix}
\usepackage{epic}
\usepackage{amsfonts}
\usepackage{xcolor}
\usepackage{eucal}
\usepackage{latexsym}

\numberwithin{equation}{section}

\usepackage{array} 
\usepackage{float}
\usepackage{todonotes}
\usepackage{epic,epsfig}
\usepackage{graphicx}
\usepackage[all]{xy}
\usepackage{tikz}
\newtheorem{theorem}{Theorem}[section]

\newtheorem{remark}{Remark}
\numberwithin{equation}{section}
\DeclareMathAccent{\wtilde}{\mathord}{largesymbols}{"65}
\DeclareMathAccent{\what}{\mathord}{largesymbols}{"62}

\def\m@th{\mathsurround=0pt}
\mathchardef\bracell="0365
\def\upbrall{$\m@th\bracell$}
\def\undertilde#1{\mathop{\vtop{\ialign{##\crcr
    $\hfil\displaystyle{#1}\hfil$\crcr
     \noalign
     {\kern1.5pt\nointerlineskip}
     \upbrall\crcr\noalign{\kern1pt
   }}}}\limits}

\newcommand{\bblu}{\begin{color}{blue}}
\newcommand{\bred}{\begin{color}{red}}
\newcommand{\ecl}{\end{color}}

\newcommand{\bB}{\boldsymbol{B}}

\newcommand{\bI}{\boldsymbol{I}}

\newcommand{\bK}{\boldsymbol{K}}

\newcommand{\bM}{\boldsymbol{M}}

\newcommand{\bt}{{\boldsymbol{t}}}
\newcommand{\bi}{{\boldsymbol{i}}}

\newcommand{\be}{\begin{equation}}
\newcommand{\ee}{\end{equation}}
\newcommand{\bea}{\begin{eqnarray}}
\newcommand{\eea}{\end{eqnarray}}
\newcommand{\bse}{\begin{subequations}}
\newcommand{\ese}{\end{subequations}}
\newcommand{\nn}{\nonumber}

\newcommand{\bu}{\boldsymbol{u}}
\newcommand{\bv}{{\boldsymbol v}}

\newcommand{\bs}{{\boldsymbol s}}

\begin{document}
\title{Lattice KP type  equations arising from eigenfunctions \\ and Dbar problems}

\author{Leilei Shi$^{1,2,3,5}$, ~~Peter van der Kamp$^{4}$, ~~ Cheng Zhang$^{1,2}$,~~
Da-jun Zhang$^{1,2}$\footnote{Corresponding author. Email: djzhang@shu.edu.cn}
\\
 [7pt]
{\small $^{1}$Department of Mathematics, Shanghai University, Shanghai 200444,    China} \\
{\small $^{2}$Newtouch Center for Mathematics of Shanghai University,  Shanghai 200444, China}\\
{\small $^{3}$SISSA, International School for Advanced Studies, Trieste 34136, Italy}\\
{\small $^{4}$Department of Mathematical and Physical Sciences, La Trobe University, Victoria 3086, Australia}\\
{\small $^{5}$INFN, Sezione di Trieste, Trieste, Italy}
	}

\maketitle

\begin{abstract}
In this paper, we construct the lattice Kadomtsev-Petviashvili (KP) type eigenfunction equations.
A homogeneous nonlocal $\bar{\partial}$ problem is considered,
from which we are able to define the eigenfunction of the Lax pair of the lattice KP equation.
The eigenfunction together with its expansions at infinity and at a finite analytic point
provide formulations of the lattice modified KP equation, the lattice Schwarzian KP equation
and the Nijhoff-Quispel-Capel KP (NQC-KP) equation.
We also consider an inhomogeneous nonlocal $\bar{\partial}$ problem.
It defines the eigenfunction of the Lax pair of the lattice modified KP equation.
This eigenfunction generates a direct formulation for the NQC-KP equation,
which is different from the previous ones.
Explicit solutions of these equations are obtained, from which we can see the difference
of the different formulations for  same equations.

\begin{description}
\item[Keywords:] lattice Kadomtsev-Petviashvili type equation; eigenfunction; Dbar problem; soliton solution
\item[Mathematics Subject Classification:] 37K60
\end{description}
\end{abstract}

\section{Introduction}\label{sec-1}

Most of integrable systems have explicit Lax pairs.
As solutions of the Lax pairs, the eigenfunctions can be formulated by tau functions,
which has been observed from the bilinear method \cite{Hir-1974},
Sato's theory \cite{MiwJD-2000}, direct linearization (DL) approach \cite{FA-PRL-1981}, etc.
Therefore it is not surprised that many integrable equations can be characterized by eigenfunctions.
A well known example is the modified Korteweg-de Vries (mKdV) equation
by the eigenfunction of the Lax pair of the KdV equation
(in this paper, we call it the eigenfunction of the KdV equation for short) via the Cole-Hopf transformation.
Konopelchenko used to study eigenfunction equations of some continuous integrable equations \cite{K-RMP-1990}.
In a recent paper \cite{WZZZ-JPA-2024}, eigenfunction equations related to the lattice KdV (lKdV)
equation were investigated.
For lattice eigenfunction equations were constructed, from which some equations in the
Adler-Bobenko-Suris (ABS) list \cite{ABS-CMP-2003} (except H1, H2 and Q4)
can be formulated by the eigenfunctions of the lKdV equation \cite{WZZZ-JPA-2024}.
More recently, we revisited the construction of the lKdV-type eigenfunction equations \cite{SZZ-MPAG-2025,SZZ-PA-2026},
facilitated by combining the  $\bar{\partial}$ (Dbar) method and DL approach.

The $\bar{\partial}$  problem
\begin{eqnarray}\label{1.1}
    \bar{\partial}f(p,\bar{p}):=\frac{\partial}{\partial \bar{p}}f(p,\bar{p})=g(p,\bar{p})
\end{eqnarray}
is considered as a fundamental equation in complex analysis for solving the unknown function $f$
for a given function $g$.
Here $p\in\mathbb{C}$ and $\bar{p}$ stands for its complex conjugate.
Since $\bar{\partial}f|_{p=p_0}=0$ indicates $f$ is analytic at $p_0\in\mathbb{C}$,
the function $\bar{\partial}f$ serves a role to measure  the deviation of $f$ from holomorphicity.
The $\bar{\partial}$  problem as a concept was introduced by Beals and Coifman \cite{BC-1980-81,BC-1981-82}.
It was used on a high order matrix spectral problem, constructing solutions via  the inverse scattering transform (IST) \cite{BC-1981-82,BC-CAPM-1984}.
As a generalization of the Riemann-Hilbert problem,
it is in particular crucial in using the IST to solve
those (2+1)-dimensional equations (e.g. the Kadomtsev-Petviashvili (KP) II equation \cite{ABF-SAPM-1983})
of which the Riemann-Hilbert formulation is inadequate.
Combined with the dressing method, the $\bar{\partial}$-dressing approach  has been systematically
described by Zakharov and Manakov in \cite{ZM-FAA-1985}
and Beals and Coifman in \cite{BC-PD-1986} (also see the review paper \cite{BC-IP-1989}).
In addition, the $\bar{\partial}$ problem can be embedded into the DL scheme
through the generalized Cauchy integral formula (also known as the Cauchy-Pompeiu formula
where the function $g$ in \eqref{1.1} should be considered as a distribution),
which was  observed by Fokas and Ablowitz in \cite{FA-JMP-1984}.
Meanwhile, Jaulent and Manna developed a direct way, other than the $\bar{\partial}$-dressing,
in a sequence of works \cite{JM-IP-1986,JM-EPL-1986,JM-IP-1987,JM-JMP-1987}.
They considered  solutions of the $\bar{\partial}$ problem as eigenfunctions,
formulated by the associated  Cauchy-Pompeiu formula.
Using these solutions, they introduced the so-called spectral Wronskian (see \cite{JM-JMP-1987}),
by which they can define a squared eigenfunction symmetry
as well as introduce and deal with time evolution for the whole integrable hierarchy.
Their method was later illustrated with details for the KdV hierarchy and Toda hierarchy \cite{JMM-IP-1988}.

In our recent paper \cite{SZZ-MPAG-2025}, we combined a $\bar{\partial}$ problem and the DL approach,
by which we could recover all the lattice eigenfunctions found in \cite{WZZZ-JPA-2024}.
A quasi-local $\bar{\partial}$ problem was introduced and  an eigenfunction was defined
through the associated  Cauchy-Pompeiu formula.
The eigenfunction is proved to satisfy the Lax pair of the lattice potential KdV (lpKdV) equation.
And the eigenfunction itself satisfies the lattice mKdV (lmKdV) equation
(or H3 equation in the ABS list \cite{ABS-CMP-2003}).
Later, in \cite{SZZ-PA-2026}, we started from the Lax pair of  the lpmKdV equation (with a parameter extension
in the Lax pair)
and introduced an inhomogeneous quasi-local $\bar{\partial}$ problem.
We could show that the eigenfunction of the lmKdV equation gives rise to the
Nijhoff-Quispel-Capel (NQC) equation.
Thus, by means of $\bar{\partial}$ problems and the DL approach, we could see how the lKdV-type equations
arise as eigenfunction equations from the Lax pairs of the lpKdV equation and the lmKdV equation.
In addition, in \cite{SZZ-MPAG-2025,SZZ-PA-2026}, we also explained how the Cauchy matrix
variables $\{S^{(i,j)}\}$ are formulated from these eigenfunctions.

To have a full profile for the eigenfunction equations of lattice equations,
we extend the research of \cite{SZZ-MPAG-2025,SZZ-PA-2026} to the 3-dimensional case.
We will consider two nonlocal $\bar{\partial}$ problems.
The first one will be connected to the eigenfunction of the lattice KP (lKP) equation. We will see that
the  eigenfunction directly defines the lattice mKP (lmKP) equation.
We will also show that how the lattice Schwarzian KP (lSKP) equation arise from this approach.
Besides, we will show two different formulations to the NQC-KP equation from the lKP eigenfunction.
The second one is an inhomogeneous nonlocal $\bar{\partial}$ problem,
connected with the Lax pair of the  lmKP equation.
The eigenfunction will provide  a third way to formulate  the NQC-KP equation.
In addition to the lKP-type equations, we will also
discuss the correspondence between the  lKP-type  Cauchy matrix
variables  and the lKP and lmKP eigenfunctions.

The paper is organized as follows.
In Sec.\ref{sec-2}, we work on a homogeneous nonlocal $\bar{\partial}$ problem
that is associated with the lKP equation and its Lax pair.
The lmKP, lSKP and NQC-KP equations will be constructed from the lKP eigenfunction.
$N$-soliton solutions of these equations together with
some lKP-type  Cauchy matrix variables  will be formulated.
In Sec.\ref{sec-3}, we introduce an inhomogeneous $\bar{\partial}$ problem
to connect the lmKP equation and show that the lmKP eigenfunction
directly defines the NQC-KP equation.
Their $N$-soliton solutions will be presented as well.
Finally, conclusions are provided in Sec.\ref{sec-4}.

\section{Homogeneous nonlocal $\bar{\partial}$ problem and lKP-type equations}\label{sec-2}

In this section, we will work on a homogeneous nonlocal $\bar{\partial}$ problem (see\eqref{DP-1}),
by which we define the eigenfunction of the Lax pair of the lKP equation.
The lmKP, lSKP and NQC-KP equations together with their soliton solutions will be formulated
from this eigenfunction.

\subsection{Homogeneous nonlocal $\bar{\partial}$ problem and the lKP equation}\label{sec-2-1}

The lattice potential KP (lpKP) equation is \cite{NCWQ-PLA-1984}
\begin{eqnarray}\label{dKP}
    (\dot{u}+\zeta)(\alpha-\beta+\dot{\widetilde{u}}
    -\dot{\widehat{u}})+(\widetilde{u}+\alpha)(\beta-\zeta+\widehat{\widetilde{u}}-\dot{\widetilde{u}})
    +(\widehat{u}+\beta)(\zeta-\alpha+\dot{\widehat{u}}-\widehat{\widetilde{u}})=0,
\end{eqnarray}
where $u:=u(n,m,l)$  is a function defined on $\mathbb{Z}^3$,
$\alpha, \beta$ and $\zeta$ are the spacing parameters in the $n, m$ and $l$-direction, respectively.
In this paper, we employ shorthand notations to express shifts of a function in different directions, e.g.
\begin{equation}
\widetilde{u}:=u(n+1,m,l),~~\widehat{u}:=u(n,m+1,l),~~ \dot{u}:=u(n,m,l+1),~~
\widehat{\widetilde{u}}:=u(n+1,m+1,l).
\end{equation}

Let us consider the homogeneous nonlocal $\bar{\partial}$ problem
\begin{subequations}\label{DP-1}
\begin{eqnarray}\label{DP-11}
\bar{\partial} \phi\left( p\right)
=\int_\mathbb{C}\phi \left( \mu\right) R\left(\mu, p\right){\rm d}\mu\wedge {\rm d}\bar{\mu},
\end{eqnarray}
with the asymptotic behavior
\begin{equation}\label{chicond}
    \phi \left( p\right) =1+p^{-1}\phi^{(-1)}+p^{-2}\phi^{(-2)}+\dots, \ \ |p|\rightarrow \infty.
\end{equation}
Here and hereafter, we adopt the shorthand $\phi(p)$ for $\phi(n,m,l; p,\bar{p})$.
The kernel $R(\mu,p)$ is chosen to be of the following separable form
\begin{eqnarray}\label{R}
    R(\mu,p)=(r_1(\mu))^T r_2(p)=\rho(\mu)(r^0_1(\mu))^T r^0_2(p) \rho^{-1}(p),
\end{eqnarray}
where
\begin{eqnarray}\label{rho}
 \rho(p)=(\alpha+p)^n(\beta+p)^m(\zeta+p)^l,
\end{eqnarray}
$r_1(\mu)=\rho(\mu) r^0_1(\mu)$, $r_2(p)= \rho^{-1}(p)  r^0_2(p)$, $r^0_1(p)$ and $r^0_2(p)$ are $N$-component column vector functions
depending only on $p$ (independent of $n,m$ and $l$).
In addition, we impose the condition on $r_1(p),r_2(p) \in L_{0,0}(\mathbb{C})$ which satisfy the hypotheses of Theorem~\ref{thm-1}. Here we denote by $L_{k,s}(V )$ the space of $(k,s)$-forms on $V$ with coefficients in distributions of measure type on $V$ \cite{H-2012}.
\end{subequations}
Note again that with the above setting, $\phi(p)$ also depends on  $(n,m,l)$  and their parameters $(\alpha,\beta,\zeta)$.

To solve the nonlocal $\bar{\partial}$ problem \eqref{DP-1}, by making use of
the Cauchy-Pompeiu integral formula, is equivalent to solving the following linear integral equation
\begin{equation}\label{LI}
    \phi \left( p\right) =1+\frac{1}{2\pi {\rm i}}\int _\mathbb{C}\int _\mathbb{C}
    \frac{\phi(\mu) R\left( \mu,\xi \right) }{\xi -p}
    {\rm d}\mu \wedge {\rm d}\overline{\mu }\wedge {\rm d}\xi \wedge {\rm d}\overline{\xi }.
\end{equation}
We will explain this equivalence later through
the following remark and Theorem \ref{thm-1}.

\begin{remark}\label{Rem-1}
The Cauchy-Pompeiu integral formula is \cite{P-1909} (see also \cite{H-2012,AF-book-2021})
\begin{eqnarray}\label{Cauchy}
    f(p) = \frac{1}{2\pi {\rm i}} \int_{\partial \Omega} \frac{f(\mu)}{\mu - p} \,
    \mathrm{d}\mu + \frac{1}{2\pi {\rm i}} \int_{\Omega} \frac{\bar{\partial}f(\mu)}{\mu - p} \, \mathrm{d}\mu \wedge \mathrm{d}\bar{\mu},
\end{eqnarray}
where $f(p)$ is continuous and has continuous partial derivatives in a finite region $\Omega$
with the boundary $\partial \Omega$ that is a positively oriented simple closed contour.
This formula remains valid when $\bar{\partial}f(p)$ is a general function or a distribution
and then $\Omega$ can be extended to the whole complex plane $\mathbb{C}$.
(see \cite{H-2012} and  Appendix A in \cite{SZZ-PA-2026}).
In this case, if  $f(p)$ is a solution of the $\bar{\partial}$ problem \eqref{DP-11}
and  has an expansion at infinity:
\begin{equation}
f(p)=\sum^N_{j=-\infty} f^{(j)} p^j, ~~~ (N \geq 0),
\end{equation}
then, substituting the $\bar{\partial}$ problem \eqref{DP-11} into \eqref{Cauchy} gives rise to an integral equation
\begin{eqnarray}\label{CP}
    f(p) = \mathrm{Nor}f(p)
    + \frac{1}{2\pi {\rm i}}\int _\mathbb{C}\int _\mathbb{C}\frac{f(\mu) R\left( \mu,\xi \right) }{\xi -p} {\rm d}\mu \wedge {\rm d}\overline{\mu }\wedge {\rm d}\xi \wedge {\rm d}\overline{\xi }.
\end{eqnarray}
where $\mathrm{Nor}f(p)=\sum^N_{j=0} f^{(j)} p^j$
denotes the principal part of $f(p)$ at infinity  \cite{JMM-IP-1988}.
Thus, if we only consider those solutions that are determined through solving the integral equation \eqref{CP},
the assumption that the homogeneous equation admits only the zero solution $f(p)=0$
is equivalent to the assumption that \eqref{DP-11} admits only zero solution $f(p)=0$ if $\mathrm{Nor}f(p)=0$.
Such an assumption has been used in the paper \cite{JMM-IP-1988}.
\end{remark}

In the following, we will show that the $\bar{\partial}$ problem \eqref{DP-1} does admit a unique solution.

\begin{theorem}\label{thm-1}
Define
\begin{equation}\label{g}
g(p)
:=
\frac{1}{2\pi\mathrm{i}}
\int_{\mathbb C}
\frac{r_2(\xi)}{\xi-p}\,
{\rm d}\xi\wedge {\rm d}\bar\xi
\end{equation}
and
\begin{equation}\label{K}
\mathcal{K}
:
=
-\frac{1}{2\pi\mathrm{i}}
\int_{\mathbb C}\int_{\mathbb C}
\frac{r_2(\xi)r_1^T(\mu)}
{\xi-\mu}\,
{\rm d}\xi\wedge {\rm d}\bar\xi\,
{\rm d}\mu\wedge {\rm d}\bar\mu.
\end{equation}
Suppose that all the above integrals are well defined, and
\[
\det(\bI_N+\mathcal{K})\neq0,
\]
then the $\bar\partial$ problem \eqref{DP-1} admits a unique solution.
\end{theorem}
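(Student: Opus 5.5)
The approach exploits the separable (rank-$N$) structure of the kernel \eqref{R} to reduce the $\bar\partial$ problem \eqref{DP-1}, via its integral form \eqref{LI}, to an $N\times N$ linear algebraic system. Following Remark~\ref{Rem-1}, any solution of \eqref{DP-1} with the normalization \eqref{chicond} satisfies \eqref{CP} with $\mathrm{Nor}\,\phi=1$, i.e.\ the linear integral equation \eqref{LI}. Conversely, any solution of \eqref{LI} solves \eqref{DP-1}, since $\bar\partial_p\frac{1}{2\pi{\rm i}}\frac{1}{\xi-p}$ is (up to the orientation convention for ${\rm d}\xi\wedge{\rm d}\bar\xi$) the delta distribution at $\xi=p$, and the integral term decays like $p^{-1}$. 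So it suffices to show that \eqref{LI} has exactly one solution.

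\textbf{Reduction.} Insert $R(\mu,\xi)=r_1^T(\mu)r_2(\xi)$ into \eqref{LI} and define the constant (in $p$) row vector
\[
\boldsymbol{c}:=\int_{\mathbb C}\phi(\mu)\,r_1^T(\mu)\,{\rm d}\mu\wedge{\rm d}\bar\mu .
\]
Then \eqref{LI} becomes $\phi(p)=1+\boldsymbol{c}\,g(p)$, with $g$ as in \eqref{g}. Thus every solution lies in the $N$-parameter family $1+\boldsymbol{c}\,g(p)$.

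\textbf{Consistency condition.} Multiply $\phi(\mu)=1+\boldsymbol{c}\,g(\mu)$ by $r_1^T(\mu)$ and integrate over $\mu$. This gives
\[
\boldsymbol{c}=\int_{\mathbb C} r_1^T(\mu)\,{\rm d}\mu\wedge{\rm d}\bar\mu+\boldsymbol{c}\int_{\mathbb C} g(\mu)\,r_1^T(\mu)\,{\rm d}\mu\wedge{\rm d}\bar\mu .
\]
By \eqref{g} and \eqref{K}, the last integral is exactly $-\mathcal K$. Hence $\boldsymbol{c}(\bI_N+\mathcal K)=\boldsymbol{b}$, where $\boldsymbol{b}:=\int r_1^T\,{\rm d}\mu\wedge{\rm d}\bar\mu$.

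\textbf{Existence and uniqueness.} Since $\det(\bI_N+\mathcal K)\neq0$, the vector $\boldsymbol{c}=\boldsymbol{b}(\bI_N+\mathcal K)^{-1}$ is uniquely determined. Reversing the computation shows that $\phi=1+\boldsymbol{c}g$ does solve \eqref{LI}, which gives existence. For uniqueness, suppose $\phi_1,\phi_2$ are two solutions. Their difference $\psi$ solves the homogeneous equation, so $\psi=\boldsymbol{d}\,g$ with $\boldsymbol{d}(\bI_N+\mathcal K)=0$. This forces $\boldsymbol{d}=0$, hence $\psi=0$. This is precisely the zero-solution assumption discussed in Remark~\ref{Rem-1}, now proved rather than assumed.

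\textbf{Main obstacle.} The delicate point is the analytic justification: exchanging the order of the $\mu$ and $\xi$ integrations (Fubini) and interpreting the Cauchy-Pompeiu formula for measure-type coefficients in $L_{0,0}(\mathbb C)$. We handle this using the hypothesis that all integrals are well defined, together with the distributional version of \eqref{Cauchy} cited in Remark~\ref{Rem-1}. We also need $g(p)=O(p^{-1})$ as $|p|\to\infty$, so that \eqref{chicond} holds; this follows from expanding $\frac{1}{\xi-p}=-\frac{1}{p}\sum_{k\ge0}(\xi/p)^k$, provided $r_2$ has suitable decay or compact support.
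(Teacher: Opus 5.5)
Your proof is correct and is essentially the paper's argument: both use the separable kernel to write any solution as $1+A\,g(p)$ with a constant row vector $A$, and reduce existence and uniqueness to the $N\times N$ linear system $A(\bI_N+\mathcal{K})=\int_{\mathbb C} r_1^T(\mu)\,{\rm d}\mu\wedge{\rm d}\bar\mu$. The only cosmetic difference is that you work throughout with the integral equation \eqref{LI}, whereas the paper verifies existence directly by computing $\bar\partial\phi_0=A_0r_2$ and invokes the Cauchy--Pompeiu formula only in the uniqueness step.
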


\begin{proof}
We first prove existence of the solution. 
Note that from its definition $\mathcal{K}$ is independent of both $p$ and $\bar{p}$,
and it can be written as
\begin{equation}
\mathcal{K}=- \int_{\mathbb C}
g(\mu)r_1^T(\mu)\,
{\rm d}\mu\wedge {\rm d}\bar\mu.
\end{equation}
Set
\begin{equation}
a_0
:=
\int_{\mathbb C}
r_1^T(\mu)\,
{\rm d}\mu\wedge {\rm d}\bar\mu .
\end{equation}
Since $I_N+\mathcal{K}$ is invertible, we define
\begin{equation}\label{A-0}
A_0:=a_0(\bI_N+\mathcal{K})^{-1}
\end{equation}
and
\begin{eqnarray}\label{phi-construction}
    \phi_0(p):=1+A_0g(p).
\end{eqnarray}
We will show that $\phi_0(p)$ is a solution to the Dbar problem \eqref{DP-1}.
Noting that the Green function of the $\bar \partial$ is $\frac{-1}{2\pi \mathrm{i} (\xi-p)}$,
from \eqref{g} we immediately have $\bar \partial g(p)=r_2(p)$,
and consequently,
\[
\bar\partial \phi_0(p)=A_0r_2(p),
\]
which is subject to
\[
\phi_0(p)=1+O(p^{-1}).
\]
Now, multiplying both sides of \eqref{phi-construction} by $r_1^T(p)$ and integrating over $\mathbb C$, we obtain
\begin{align*}
\int_{\mathbb C}
\phi_0(\mu)r_1^T(\mu)\,
{\rm d}\mu\wedge {\rm d}\bar\mu
&=
a_0+
A_0\int_{\mathbb C}
g(\mu)r_1^T(\mu)\,
{\rm d}\mu\wedge {\rm d}\bar\mu
\\
&=
a_0-A_0\mathcal{K}
\\
&=
A_0,
\end{align*}
where the last two equalities follow from the \eqref{K} and \eqref{A-0}.
Finally, noticing that $R(\mu,p)$ is separated as in \eqref{R}, we have
\[
\int_{\mathbb C}
\phi_0(\mu)R(\mu,p)\,
{\rm d}\mu\wedge {\rm d}\bar\mu
=
A_0 r_2(p)
=
\bar\partial \phi_0(p),
\]
which means $\phi_0$ is a solution of the $\bar\partial$ problem \eqref{DP-1}.

We next prove uniqueness. Let $\phi$ be any solution of the $\bar\partial$ problem \eqref{DP-1}. 
Since $ R(\mu,p)$ is separated as in \eqref{R}, we can rewrite \eqref{DP-11} into 
\[
\bar\partial \phi(p)=Ar_2(p),
\]
where
\[
A
:=
\int_{\mathbb C}
\phi(\mu)r_1^T(\mu)\,
{\rm d}\mu\wedge {\rm d}\bar\mu.
\]
Then, making use of the generalized Cauchy-Pompeiu integral formula \eqref{CP}, we have
\[
\phi(p)=1+Ag(p).
\]
Substituting this expression into the definition of $A$ yields
\begin{align*}
A
&=
a_0+
A\int_{\mathbb C}
g(\mu)r_1^T(\mu)\,
{\rm d}\mu\wedge {\rm d}\bar\mu.
\\
&=
a_0-A\mathcal{K}.
\end{align*}
Since $\bI_N+\mathcal{K}$ is invertible, we have
\[
A=a_0(\bI_N+\mathcal{K})^{-1}=A_0,
\]
and hence
\[
\phi(p)=1+Ag(p)=1+A_0g(p)=\phi_0(p).
\]
Thus the solution of the $\bar\partial$ problem \eqref{DP-1} exists and is unique.
\end{proof}

\subsection{Lax pair, lKP and lpKP equations}\label{sec-2-2}

We now show that the eigenfunction $\phi(p)$, defined by the $\bar{\partial}$ problem \eqref{DP-1}, satisfies
a Lax pair  which will later lead us to the lKP and lpKP equations.
First, we define an operator $L(p)$ by acting on $\phi(p)$:
\begin{subequations}
    \begin{align}
       L(p)\phi(p) =(\beta+p)\widehat{\phi}(p)-(\alpha+p)\widetilde{\phi}(p)+h\phi(p),
   \end{align}
\end{subequations}
where
\begin{subequations}
    \begin{align}\label{h}
           h =\alpha-\beta+\widetilde{\phi}^{(-1)}-\widehat{\phi}^{(-1)},
    \end{align}
\end{subequations}
Then we consider the asymptotic behavior of $L(p)\phi(p)$ in light of \eqref{chicond}:
\begin{align}
     L(p)\phi(p)&=(\beta+p)(1+p^{-1}\widehat{\phi}^{(-1)}+\cdots)
     -(\alpha+p)(1+p^{-1}\widetilde{\phi}^{(-1)}+\cdots)+h(1+p^{-1}\phi^{(-1)}+\cdots)\nonumber\\
     &=\beta-\alpha+\widehat{\phi}^{(-1)}-\widetilde{\phi}^{(-1)}+h+O(p^{(-1)})\nonumber\\
     &=O(p^{(-1)})\nonumber,
\end{align}
and compute
\begin{align}
    \bar{\partial}\big(L(p)\phi(p)\big)
    &=(\beta+p)\bar{\partial}\widehat{\phi}(p)-(\alpha+p)\bar{\partial}\widetilde{\phi}(p)
    +h\bar{\partial}\phi(p)\nonumber\\
    &=\int_C(\beta+\mu)\widehat{\phi} \left( \mu\right)R\left(\mu, p\right)-(\alpha+\mu)\widetilde{\phi}
    \left( \mu\right)R\left(\mu, p\right)+h\phi(\mu)R(\mu,p){\rm d}\mu{\rm d}\bar{\mu}\nonumber\\
    &=\int_CL(\mu)\phi(\mu)R(\mu,p){\rm d}\mu{\rm d}\bar{\mu}\nonumber,
\end{align}
where we note that $h$ is independent of \(p\).
Thus $L(p)\phi(p)$ satisfies the same $\bar{\partial}$ problem \eqref{DP-11} as $\phi(p)$ and  $\mathrm{Nor}\big(L(p)\phi(p)\big)=0$.
Therefore, due to the uniqueness of the solutions for the  $\bar{\partial}$ problem \eqref{DP-1}
stated in Theorem \ref{thm-1},
we get the first linear equation for $\phi(p)$:
\begin{subequations}\label{Lax_tri}
\begin{align}
   L(p)\phi(p)=   (\beta+p)\widehat{\phi}(p)-(\alpha+p)\widetilde{\phi}(p)+h\phi(p)&=0\label{L1}.
\end{align}
Along the same line one can verify that
\begin{align}
     (\zeta+p)\dot{\phi}(p)-(\alpha+p)\widetilde{\phi}(p)+f\phi(p) =0,\label{L2}
    \end{align}
\end{subequations}
where
\begin{eqnarray}\label{f}
        f =\alpha-\zeta+\widetilde{\phi}^{(-1)}-\dot{\phi}^{(-1)}.
\end{eqnarray}
Note that \eqref{L1} and \eqref{L2} indicate that
\begin{align}
(\beta+p)\widehat{\phi}(p)-(\zeta+p)\dot{\phi}(p)+(h-f)\phi(p) =0,\label{L3}
\end{align}
which composes a triplet together with \eqref{L1} and \eqref{L2}.
Introducing the transformation
\begin{eqnarray}
\varphi(p)= (\alpha+p)^n(\beta+p)^m(\zeta+p)^l\phi(p),
\end{eqnarray}
the  triplet   takes the   form
\begin{subequations}
    \begin{align}
         \widehat{\varphi}(p)-\widetilde{\varphi}(p)+h\varphi(p) &=0,\\
    \dot{\varphi}(p)-\widetilde{\varphi}(p)+f\varphi(p) &=0,\\
    \widehat{\varphi}(p)-\dot{\varphi}(p)+(h-f)\varphi(p) &=0,
    \end{align}
\end{subequations}
whose compatibility requires
\begin{eqnarray}\label{CC}
    \widehat{f}-\widetilde{f}=\dot{h}-\widetilde{h} ,\quad -\dot{h}f+\widehat{f}h=0.
\end{eqnarray}
The system \eqref{CC} is known as the non-potential lattice KP (lKP) equation,
 originally derived by Nimmo \cite{N-JPA-2006},
 and the corresponding scalar form
\begin{align}
    & \widetilde{f}\,\widetilde{\widetilde{f}}\,\widehat{\widetilde{f}}\,
    \dot{\widehat{\widetilde{f}}}\,\dot{\widehat{f}}
- \widetilde{\widetilde{f}}\,\widetilde{\widehat{f}}\,
\dot{\widehat{\widetilde{f}}}\,\dot{\widehat{f}}\,\widehat{f}
+ \widetilde{\widetilde{f}}\,\widehat{\widetilde{f}}\,
\dot{\widehat{\widetilde{f}}}\,\dot{\widehat{f}}\, \dot{f}
- \widetilde{f}\,\widetilde{\widetilde{f}}\,\widehat{\widetilde{\widetilde{f}}}\,
\dot{\widehat{f}}\,\dot{\widetilde{f}}
- \widetilde{f}\,\widetilde{\widetilde{f}}\,\dot{\widehat{\widetilde{f}}}\,
\dot{\widetilde{f}}\,\dot{\widehat{f}}
+ \widehat{f}\,\widehat{\widetilde{f}}\,\dot{\widehat{f}}\,
\dot{\widetilde{f}}\,\widehat{\widetilde{\widetilde{f}}} \nn\\
 +~&
\widetilde{\widetilde{f}}\,\widehat{\widetilde{\widetilde{f}}}\,\widehat{\widetilde{f}}\,
\dot{\widetilde{f}}\,\dot{f}
- \widehat{\widetilde{\widetilde{f}}}\,\widehat{\widetilde{f}}{}^2\,\dot{f}\,\dot{\widetilde{f}}\,
- \widehat{\widetilde{\widetilde{f}}}\,\widehat{\widetilde{f}}\,
\dot{\widehat{f}}\dot{f}\,\dot{\widetilde{f}}\,
+ \widehat{\widetilde{\widetilde{f}}}\,\widetilde{\widehat{f}}\,
\dot{f}\,\dot{\widetilde{f}}{}^{2}\,
+ \widetilde{f}\,\widetilde{\widetilde{f}}\,\dot{\widehat{f}}\,
\dot{\widetilde{f}}\,\dot{\widetilde{\widetilde{f}}}
- \widetilde{\widetilde{f}}\,\widetilde{\widehat{f}}\,\dot{f}\,
\dot{\widetilde{f}}\,\dot{\widetilde{\widetilde{f}}}\,
=0
\label{lattice-KP}
\end{align}
was presented in \cite{F-PhD-2018}. The above non-potential scalar form of the lKP can be written as
\begin{eqnarray}
\widetilde{A} \left( \frac{\dot{B}}{\widetilde{\widetilde{B}}}
- \frac{\widetilde{B}}{\dot{\widetilde{B}}} \right)
= \widetilde{\widetilde{A}} (\dot{B} - \widetilde{B})
+ A \left( \frac{1}{\widetilde{\widetilde{B}}} - \frac{1}{\dot{\widetilde{B}}} \right)
\end{eqnarray}
with
\begin{equation}
 A = \dot{f} - \widehat{f},~~~ B = f / \widehat{f}.
\end{equation}

In addition, substituting \eqref{h} and \eqref{f} into \eqref{CC},
the first equation in \eqref{CC} is satisfied automatically,
while the second equation, after defining
\begin{equation}\label{u-phi}
u=\phi^{(-1)},
\end{equation}
gives rise to the lpKP equation \eqref{dKP}.

\subsection{Eigenfunction equations of the lKP type}\label{sec-2-3}

\subsubsection{The lmKP equation}\label{sec-2-3-1}

To construct eigenfunction equations, we consider a point $c\in \mathbb{C}$, where $\phi(p)$ is analytic.
Then, in a neighbourhood of $c$, $\phi(p)$ can be expanded as the following:
\begin{eqnarray}\label{P-EXP}
    \phi(p)=\sum_{\iota=0}^{+\infty}\phi_c^{(\iota)}(p-c)^{\iota},
\end{eqnarray}
where the coefficients can be uniquely expressed as
\begin{align}
\phi_c^{\left(0\right)}&= \phi(p=c)=1+\frac{1}{2\pi {\rm i}}\int _\mathbb{C}\int _\mathbb{C}
\frac{\phi(\mu) R\left( \mu,\xi \right) }{\xi -c} {\rm d}\mu \wedge {\rm d}\overline{\mu }
\wedge {\rm d}\xi \wedge {\rm d}\overline{\xi },\\
    \phi_c^{\left(\iota\right)}&= \frac{1}{2\pi {\rm i}}\int _\mathbb{C}\int _\mathbb{C}
    \frac{\phi(\mu) R\left( \mu,\xi \right) }{(\xi -c)^{\iota+1}}
    {\rm d}\mu \wedge {\rm d}\overline{\mu }\wedge {\rm d}\xi \wedge {\rm d}\overline{\xi },
    \  \ \text{for}   \ \ \iota\geq 1.
\end{align}
Note that we assume
\begin{equation}
\phi(p,n+j,m+s)
=\sum_{\iota=0}^{+\infty}\phi_c^{(\iota)}(n+j,m+s)(p-c)^{\iota},
\qquad j,s\in\mathbb{Z},
\end{equation}
and such an interchange between the expansion and the shift operators is valid at least for soliton solutions
(see Sec.\ref{sec-2-4}).

Substituting the expansion \eqref{P-EXP} into the Lax pair \eqref{Lax_tri} yields the recurrence system
\begin{subequations}\label{RR}
\begin{align}
&    (\beta+c)\widehat{\phi}_c^{(0)}-(\alpha+c)\widetilde{\phi}_c^{(0)}+h\phi_c^{(0)}=0, \label{RR1}\\
& (\beta+c)\widehat{\phi}_c^{(\iota)}+\widehat{\phi}_c^{(\iota-1)}-(\alpha+c)\widetilde{\phi}_c^{(\iota)}
-\widetilde{\phi}_c^{(\iota-1)}+h\phi_c^{(\iota)}=0,\quad \text{for} \quad \iota\geq 1,
\label{RR2}\\
&  (\zeta+c)\dot{\phi}_c^{(0)}-(\alpha+c)\widetilde{\phi}_c^{(0)}+f\phi_c^{(0)}=0,\label{RR3}\\
&(\zeta+c)\dot{\phi}_c^{(\iota)}+\dot{\phi}_c^{(\iota-1)}- (\alpha+c)\widetilde{\phi}_c^{(\iota)}-\widetilde{\phi}_c^{(\iota-1)}+f\phi_c^{(\iota)}=0,
\quad \text{for} \quad \iota \geq 1,\label{RR4}
\end{align}
\end{subequations}
where \eqref{RR1} and \eqref{RR3}
can be written as the following forms
\begin{subequations}\label{T}
\begin{align}
     \alpha-\beta+\widetilde{\phi}^{(-1)}-\widehat{\phi}^{(-1)}=h
     &=\frac{(\alpha+c)\widetilde{\phi}_c^{(0)}-(\beta+c)\widehat{\phi}_c^{(0)}}{\phi_c^{(0)}}, \label{T1}\\
   \alpha-\zeta+\widetilde{\phi}^{(-1)}-\dot{\phi}^{(-1)}=f
   &=\frac{(\alpha+c)\widetilde{\phi}_c^{(0)}-(\zeta+c)\dot{\phi}_c^{(0)}}{\phi_c^{(0)}}. \label{T2}
\end{align}
\end{subequations}
Now, substituting the above $h$ and $f$  into the second equation in the compatibility conditions \eqref{CC},
we obtain an equation only containing  $\phi_c^{(0)}$:
\begin{eqnarray}\label{DMKP_c}
    \frac{(\alpha+c)\dot{\widetilde{\phi}}_c^{(0)}
    -(\beta+c)\dot{\widehat{\phi}}_c{}^{(0)}}{\dot{\phi}_c^{(0)}}
    +\frac{(\beta+c)\widetilde{\widehat{\phi}}_c\!\!{}^{(0)}
    -(\zeta+c)\dot{\widetilde{\phi}}_c^{(0)}}{\widetilde{\phi}_c^{(0)}}
    +\frac{(\zeta+c)\dot{\widehat{\phi}}_c^{(0)}-(\alpha+c)\widehat{\widetilde{\phi}}_c\!\!{}^{(0)}}
    {\widehat{\phi}_c^{(0)}}=0.
\end{eqnarray}
Setting a new variable by
\begin{eqnarray}\label{w-phi}
   w=\phi_c^{(0)},
\end{eqnarray}
Eq. \eqref{DMKP_c} is rewritten as
\begin{eqnarray}\label{dMKP}
    \frac{(\alpha+c) \dot{\widetilde{w}}-(\beta+c) \dot{\widehat{w}}}{\dot{w}}
    +\frac{(\beta+c)\widetilde{\widehat{w}}-(\zeta+c)\dot{\widetilde{w}}}{\widetilde{w}}
    +\frac{(\zeta+c)\dot{\widehat{w}}-(\alpha+c)\widehat{\widetilde{w}}}{\widehat{w}}=0.
\end{eqnarray}
This is the lattice modified KP (lmKP) equation with a parameter $c$.
And the coupled system \eqref{T1} and \eqref{T2}, i.e.
\begin{subequations}
\begin{align}
& \alpha-\beta+\widetilde{u}-\widehat{u}=\frac{(\alpha+c)\widetilde{w}-(\beta+c) \widehat{w}}{w}, \\
&   \alpha-\zeta+\widetilde{u}-\dot{u}=\frac{(\alpha+c) \widetilde{w}-(\zeta+c) \dot{w}}{w}
\end{align}
\end{subequations}
provides the Miura  transformation between the lpKP equation and the lmKP equation \cite{HJN-book-2016}.
Thus, the eigenfunction of the Lax pair \eqref{Lax_tri} for the lKP (lpKP) equation
defines the lmKP equation.

\subsubsection{The lSKP equation and NQC-KP equation}\label{sec-2-3-2}

In addition to $\phi(c)=\phi_c^{(0)}$ that gives rise to the lmKP equation, the higher-order coefficient $\phi_c^{(1)}$ in \eqref{P-EXP} can also be involved to generate integrable lattice equations.
In what follows, we focus on the  construction of such integrable equations.

Eliminating $h$ from  \eqref{RR1} and  \eqref{RR2}$|_{\iota=1}$,
and eliminating $f$ from  \eqref{RR3} and  \eqref{RR4}$|_{\iota=1}$,
we obtain
\begin{subequations}
\begin{align}
(\beta+c)(\widehat{\phi}_c^{(1)}\phi_c^{(0)}-\widehat{\phi}_c^{(0)}\phi_c^{(1)})
+\widehat{\phi}_c^{(0)}\phi_c^{(0)}-(\alpha+c)(\widetilde{\phi}_c^{(1)}\phi_c^{(0)}
-\widetilde{\phi}_c^{(0)}\phi_c^{(1)})-\widetilde{\phi}_c^{(0)}\phi_c^{(0)}&=0,\\
(\zeta+c)(\dot{\phi}_c^{(1)}\phi_c^{(0)}-\dot{\phi}_c^{(0)}\phi_c^{(1)})
+\dot{\phi}_c^{(0)}\phi_c^{(0)}-(\alpha+c)(\widetilde{\phi}_c^{(1)}\phi_c^{(0)}
-\widetilde{\phi}_c^{(0)}\phi_c^{(1)})-\widetilde{\phi}_c^{(0)}\phi_c^{(0)}&=0,
\end{align}
\end{subequations}
which we rewrite as
\begin{subequations}\label{E3-1}
\begin{align}
         (\beta+c)\bigg(\frac{\widehat{\phi}_c^{(1)}}{\widehat{\phi}_c^{(0)}}
         -\frac{\phi_c^{(1)}}{\phi_c^{(0)}}\bigg)
         +1-(\alpha+c)\bigg(\frac{\widetilde{\phi}_c^{(1)}}{\widehat{\phi}_c^{(0)}}
         -\frac{\widetilde{\phi}_c^{(0)}\phi_c^{(1)}}{\widehat{\phi}_c^{(0)}\phi_c^{(0)}}\bigg)
         &=\frac{\widetilde{\phi}_c^{(0)}}{\widehat{\phi}_c^{(0)}},\\
  (\zeta+c)\bigg(\frac{\dot{\phi}_c^{(1)}}{\dot{\phi}_c^{(0)}} -\frac{\phi_c^{(1)}}{\phi_c^{(0)}}\bigg)
  +1-(\alpha+c)\bigg(\frac{\widetilde{\phi}_c^{(1)}}{\dot{\phi}_c^{(0)}}
  -\frac{\widetilde{\phi}_c^{(0)}\phi_c^{(1)}}{\dot{\phi}_c^{(0)}\phi_c^{(0)}}\bigg)
  &=\frac{\widetilde{\phi}_c^{(0)}}{\dot{\phi}_c^{(0)}}.
\end{align}
\end{subequations}
In terms of  a new variable
\begin{eqnarray}\label{z_c-phi}
    z_c=\frac{\phi_c^{(1)}}{\phi_c^{(0)}}+\frac{n}{\alpha+c}+\frac{m}{\beta+c}+\frac{l}{\zeta+c},
\end{eqnarray}
they are written as
\begin{subequations}
\label{2.36}
\begin{eqnarray}\label{T_SKPandMKP}
    \frac{\widetilde{w}}{\widehat{w}}=\frac{ (\beta+c)(\widehat{z}_c-z_c)}{(\alpha+c)(\widetilde{z}_c-z_c)},
    \quad   \frac{\widetilde{w}}{\dot{w}}
    =\frac{ (\zeta+c)(\dot{z}_c-z_c)}{(\alpha+c)(\widetilde{z}_c-z_c)}.
\end{eqnarray}
In addition, as a consequence, we have
\begin{equation}
    \quad   \frac{\dot{w}}{\widehat{w}}
    =\frac{ (\beta+c)(\widehat{z}_c-z_c)}{(\zeta+c)(\dot{z}_c-z_c)}.
\end{equation}
\end{subequations}
Consider the identity
\begin{eqnarray}\label{2.37}
E_m{\Big(\frac{\widetilde{w}}{\dot{w}}\Big) } \cdot E_n{\Big(\frac{\dot{w}}{\widehat{w}}\Big)}=E_l{\Big(\frac{\widetilde{w}}{\widehat{w}}\Big)} ,
\end{eqnarray}
where $E_n, E_m, E_l$ are shift operators defined by acting $f=f(n,m,l)$ as
\begin{equation}
E_n f=\widetilde f,~~  E_m f= \widehat f,~~ E_l f= \dot f .
\end{equation}
Inserting \eqref{2.36} into  \eqref{2.37}, we obtain the discrete Schwarzian KP (lSKP) equation
\begin{eqnarray}\label{dSKP}
    \frac{(\dot{\widehat{z}}_c-\widehat{z}_c)(\widetilde{\widehat{z}}_c
    -\widetilde{z}_c)(\dot{\widetilde{z}}_c-\dot{z}_c)}
    {(\dot{\widehat{z}}_c-\dot{z}_c)(\widetilde{\widehat{z}}_c
    -\widehat{z}_c)(\dot{\widetilde{z}}_c-\widetilde{z}_c)}=1.
\end{eqnarray}
The equations in \eqref{2.36} can be viewed as a set of B\"acklund transformations
between the lmKP equation \eqref{dMKP} and the above lSKP equation \eqref{dSKP}.

In the following, let us consider $c=a$ in the above formulation and introduce a function $Z(a,b)$ by
\begin{eqnarray}\label{z-a}
    z_a=\theta(a,b)\big[(b-a)Z(a,b)+1\big],
\end{eqnarray}
i.e.
\begin{equation}\label{Z-ab}
Z(a,b)=\frac{1}{b-a}(z_a \theta^{-1}(a,b)-1),
\end{equation}
where
\begin{eqnarray}\label{theta}
    \theta(a,b)=\frac{\rho(b)}{\rho(a)}=\bigg(\frac{\alpha+b}{\alpha+a}\bigg)^n
    \bigg(\frac{\beta+b}{\beta+a}\bigg)^m
    \bigg(\frac{\zeta+b}{\zeta+a}\bigg)^l.
\end{eqnarray}
Then, substituting \eqref{z-a} into the lSKP equation \eqref{dSKP}$|_{c=a}$
yields the following equation in terms of $Z(a,b)$:
\begin{eqnarray}
  \frac{ (\beta+b)\widetilde{\widehat{Z}}(a,b)-(\beta+a)\widetilde{Z}(a,b)+1}
  {(\zeta+b) \dot{\widetilde{Z}}(a,b)-(\zeta+a)\widetilde{Z}(a,b)+1}
  \cdot\frac{ (\zeta+b)\dot{\widehat{Z}}(a,b)-(\zeta+a)\widehat{Z}(a,b)+1}
  {(\alpha+b)\widehat{\widetilde{Z}}(a,b)-(\alpha+a)\widehat{Z}(a,b)+1}\nn\\ \cdot\frac{(\alpha+b)\dot{\widetilde{Z}}(a,b)-(\alpha+a)\dot{Z}(a,b)+1}
  {(\beta+b)\dot{\widehat{Z}}(a,b)-(\beta+a)\dot{Z}(a,b)+1}=1,\label{Zab}
\end{eqnarray}
which is known as the NQC-KP equation \cite{HJN-book-2016}.
There are more than one ways to formulate this equation. We will introduce a second one in the following.

\subsubsection{A second formulation of the NQC-KP equation}\label{sec-2-3-3}

We consider the eigenfunction $\phi(p)$ which is defined as a solution of the Dbar problem \eqref{DP-1}
and has been shown to satisfy the eigenvalue problems \eqref{L1}, \eqref{L2} and \eqref{L3}.
Let $\phi(a)$ and $\phi(b)$ be two eigenfunctions by taking $p=a,b$ in $\phi(p)$, respectively, where we assume that $\phi(a)\neq0$.
Define
\begin{equation}\label{X-ab}
X(a,b)=\frac{\phi(b)}{\phi(a)}
\end{equation}
and introduce the function
\begin{eqnarray}\label{S-ab}
    \mathcal{S}(a,b)=\frac{1}{b-a} (X(a,b)-1 ).
\end{eqnarray}
In the following, we will see that such a function provides a solution to the NQC-KP equation as well.

Taking $p=a,b$ in  \eqref{L1} gives
\begin{subequations}
\begin{align}
  &     (\beta+a)\widehat{\phi}(a)-(\alpha+a)\widetilde{\phi}(a)+h\phi(a)  = 0,\\
  &     (\beta+b)\widehat{\phi}(b)-(\alpha+b)\widetilde{\phi}(b)+h\phi(b) = 0.
\end{align}
\end{subequations}
Noting that $h$ is the same in the two above equations since it is independent of $p$,
we eliminate $h$ and obtain
\begin{eqnarray}
       (\beta+b)\widehat{\phi}(b)\phi(a)-(\alpha+b)\widetilde{\phi}(b)\phi(a)
       +(\alpha+a)\widetilde{\phi}(a)\phi(b)- (\beta+a)\widehat{\phi}(a)\phi(b)=0,
   \end{eqnarray}
which can be rewritten in terms of $X(a,b)$ as
\begin{subequations}
\begin{eqnarray}
       \frac{(\beta+b)\widehat{X}(a,b)-(\beta+a)X(a,b)}{(\alpha+b)\widetilde{X}(a,b)-(\alpha+a)X(a,b)}
=\frac{\widetilde{\phi}(a)}{\widehat{\phi}(a)}.
   \end{eqnarray}
Similarly, from \eqref{L2} and \eqref{L3} we have
\begin{align}
&   \frac{(\zeta+b)\dot{X}(a,b)-(\zeta+a)X(a,b)}{(\alpha+b)\widetilde{X}(a,b)-(\alpha+a)X(a,b)}
=\frac{\widetilde{\phi}(a)}{\dot{\phi}(a)},
\\
&\frac{(\beta+b)\widehat{X}(a,b)-(\beta+a)X(a,b)}{(\zeta+b)\dot{X}(a,b)-(\zeta+a)X(a,b)}
=\frac{\dot{\phi}(a)}{\widehat{\phi}(a)}.
\end{align}
\end{subequations}
These three relations, together with the identity
\begin{eqnarray}
E_l \bigg(\frac{\widetilde{\phi}(a)}{\widehat{\phi}(a)}\bigg)=
E_m \bigg(\frac{\widetilde{\phi}(a)}{\dot{\phi}(a)}\bigg)
\cdot E_n \bigg(\frac{\dot{\phi}(a)}{\widehat{\phi}(a)}\bigg)
\end{eqnarray}
yield an equation for $X(a,b)$:
\begin{eqnarray}\label{Xab}
    \frac{(\beta+b)\dot{\widehat{X}}(a,b)-(\beta+a)\dot{X}(a,b)}{(\alpha+b)\dot{\widetilde{X}}(a,b)
    -(\alpha+a)\dot{X}(a,b)}\cdot\frac{(\alpha+b)\widehat{\widetilde{X}}(a,b)-(\alpha+a)\widehat{X}(a,b)}
    {(\zeta+b)\dot{\widehat{X}}(a,b)-(\zeta+a)\widehat{X}(a,b)}\nn\\ \cdot\frac{(\zeta+b)\dot{\widetilde{X}}(a,b)-(\zeta+a)\widetilde{X}(a,b)}
    {(\beta+b)\widehat{\widetilde{X}}(a,b)-(\beta+a)\widetilde{X}(a,b)}
=1.
\end{eqnarray}
Meanwhile, from \eqref{S-ab} we have
\begin{eqnarray}
    X(a,b)=(b-a)\mathcal{S}(a,b)+1.
\end{eqnarray}
Inserting it into \eqref{Xab} yields the NQC-KP equation
\begin{eqnarray}
    \frac{(\beta+b)\dot{\widehat{\mathcal{S}}}(a,b)-(\beta+a)\dot{\mathcal{S}}(a,b)+1}
    {(\alpha+b)\dot{\widetilde{\mathcal{S}}}(a,b)
    -(\alpha+a)\dot{\mathcal{S}}(a,b)+1}\cdot
    \frac{(\alpha+b)\widehat{\widetilde{\mathcal{S}}}(a,b)-(\alpha+a)\widehat{\mathcal{S}}(a,b)+1}
    {(\zeta+b)\dot{\widehat{\mathcal{S}}}(a,b)-(\zeta+a)\widehat{\mathcal{S}}(a,b)+1}\nn\\ \cdot\frac{(\zeta+b)\dot{\widetilde{\mathcal{S}}}(a,b)-(\zeta+a)\widetilde{\mathcal{S}}(a,b)+1}
    {(\beta+b)\widehat{\widetilde{\mathcal{S}}}(a,b)-(\beta+a)\widetilde{\mathcal{S}}(a,b)+1}
=1.
\label{NQC-2}
\end{eqnarray}
This provides a second formulation of the NQC-KP equation (see Sec.\ref{sec-3-4}).

\subsection{$N$-soliton solutions}\label{sec-2-4}

We have constructed the lKP equation \eqref{lattice-KP}, the lpKP equation \eqref{dKP},
and their eigenfunction equations, namely,
the lmKP equation \eqref{dMKP}, the lSKP equation \eqref{dSKP},
the NQC-KP equation \eqref{Zab} in terms of $Z(a,b)$
and the NQC-KP equation \eqref{NQC-2} in terms of $\mathcal{S}(a,b)$.
Their solutions are formulated through the eigenfunction $\phi(p)$ respectively by
\eqref{f}, \eqref{u-phi}, \eqref{w-phi}, \eqref{z_c-phi}, \eqref{Z-ab} and \eqref{S-ab}.
In this framework, once we get explicit expression for $\phi$ by
solving the $\bar{\partial}$ problem \eqref{DP-1},
or solving the integral equation \eqref{LI} (in light of Theorem \ref{thm-1}),
we may get multi-soliton solutions for all the above mentioned  equations of the lKP-type.

To get explicit forms of $\phi(p)$ and the coefficients $\phi_c^{\iota}$
we consider the following special choices,  for which the matrix $\bI+\mathcal{K}$ is generically invertible,
\begin{eqnarray}\label{R0}
r_1^0(p)
=
\begin{pmatrix}
\delta(p-k_1)\\
\delta(p-k_2)\\
\vdots\\
\delta(p-k_N)
\end{pmatrix},~\quad
r_2^0(p)
=2\pi {\mathrm i}
\begin{pmatrix}
\rho_{1,1}^{(0)}&\rho_{1,2}^{(0)}&\cdots&\rho_{1,N'}^{(0)}\\
\rho_{2,1}^{(0)}&\rho_{2,2}^{(0)}&\cdots&\rho_{2,N'}^{(0)}\\
\vdots&\ddots&&\vdots\\
\rho_{N,1}^{(0)}&\rho_{N,2}^{(0)}&\cdots&\rho_{N,N'}^{(0)}
\end{pmatrix}
\begin{pmatrix}
\delta(p-k_1')\\
\delta(p-k_2')\\
\vdots\\
\delta(p-k'_{N'})
\end{pmatrix},
\end{eqnarray}
where $\mathrm{i}$ is the imaginary unit, $ k_j,k_{\ell}^{'}, \rho^{(0)}_{j,\ell} \in \mathbb{C}$ and
\begin{eqnarray*}
    k_j+k_{\ell}^{'}\neq 0,~k_j\neq k_{\ell}^{'},~ k_{\ell}^{'}\neq -\alpha,-\beta,-\zeta, ~~~ \forall j,\ell\in \mathbb{N}.
\end{eqnarray*}
Here, the $\delta$-function on the complex plane $\mathbb{C}$ (see Appendix A in \cite{SZZ-PA-2026} for more details) is defined as the following,
\begin{eqnarray}
    \int_{\Omega} f(p) \, \delta(p - p_0) \, {\rm d}p\wedge{\rm d}\overline{p} = f(p_0),
\end{eqnarray}
where the domain $\Omega\subset \mathbb{C}$ contains the point $p_0$ and $f(p)$ is sufficiently smooth in $\Omega$.
Substituting \eqref{R0} into \eqref{R} and then into \eqref{LI} leads to
\begin{eqnarray}
\phi( p)
&=&1+\sum_{j=1}^{N}\sum_{\ell=1}^{N'}\frac{1}{2\pi {\rm i}}\frac{\phi(k_j)\rho_{j,\ell}^{(0)}
\left(\frac{\alpha+k_j}{\alpha+k_{\ell}^{'}}\right)^n
\left(\frac{\beta+k_j}{\beta+k_{\ell}^{'}}\right)^m
\left(\frac{\zeta+k_j}{\zeta+k_{\ell}^{'}}\right)^l
2\pi {\rm i}}{k_{\ell}^{'}-p}\nonumber \\
&=&  1+\sum ^{N}_{j=1}\sum_{\ell=1}^{N'}\frac{\phi (k_j) \rho(k_j)\rho^{-1}(k_{\ell}^{'})\rho_{j,\ell}^{(0)}}{k_{\ell}^{'}-p},
\label{NS_1}
\end{eqnarray}
where (see \eqref{rho})
\begin{equation}\label{rho-k}
\rho(k)=(\alpha+k)^n(\beta+k)^m(\zeta+k)^l.
\end{equation}
Then, taking $p=k_1, k_2, \cdots, k_N$ in \eqref{NS_1}, we obtain an equation set
\begin{eqnarray}
    \phi \left(k_{i}\right)+\sum ^{N}_{j=1}\sum_{\ell=1}^{N'}\frac{\phi (k_j) \rho(k_j)\rho^{-1}(k_{\ell}^{'})\rho_{j,\ell}^{(0)}}{k_i-k_{\ell}^{'}}=1,
    ~~~ (i=1,2,\cdots,N),
\end{eqnarray}
which can be rewritten into the matrix form
\begin{subequations}
\begin{eqnarray}
       (\bI_N+\bM\bu^{-1}_{N'} \bB \bv_{N})\Phi=\bi_N,
\end{eqnarray}
where $\bI_N$ denotes the $N\times N$ identity matrix,
\begin{align}
& \bM=\left(M_{i,j}\right)_{N\times N'},~ M_{i,j}=\frac{1}{k_i-k_j^{'}},~~~
   \bB=\left(B_{\ell,j}\right)_{N'\times N}, ~ B_{\ell,j}=\rho^{(0)}_{j,\ell},\label{2.59b}\\
& \bu_{N'}=\mathrm{Diag}\{\rho(k'_1),\rho({k'_2}),\cdots,\rho(k'_{N'})\}, ~~~
   \bv_N=\mathrm{Diag}\{\rho(k_1),\rho(k_2),\cdots,\rho(k_{N})\},  \label{2.59c}\\
& \Phi=\left( \phi(k_1),  \phi(k_2), \cdots,  \phi(k_N)\right)^T,~~~
\bi_N=\underbrace{ (1, 1, \cdots  1)}_{\text{N-dimensional}}\!\!{}^T. \label{2.59d}
\end{align}
\end{subequations}
Thus we have
\begin{equation}\label{Phi}
\Phi=  (\bI_N+\bM\bu^{-1}_{N'} \bB \bv_{N})^{-1}\bi_N.
\end{equation}
In addition, $\phi(p)$ given by \eqref{NS_1} can be written as
\begin{subequations}
\begin{align}
    \phi(p)=&1+\bi_{N'}^T (\bK'-p\bI_{N'})^{-1} \bu^{-1}_{N'} \bB \bv_{N}
    \left(\bI_N+\bM\bu^{-1}_{N'} \bB \bv_{N}\right)^{-1}\bi_N\nn \\
    =&1+\bi_{N'}^T \bu^{-1}_{N'}(\bK'-p\bI_{N'})^{-1}  \bB \left(\bI_N+\bv_{N}\bM\bu^{-1}_{N'} \bB \right)^{-1}
    \bv_{N}\bi_N\nn \\
=&1+\bt^T(\bK'-p\bI_{N'})^{-1} \bB \left(\bI_N+\bv_{N}\bM\bu^{-1}_{N'} \bB \right)^{-1}\bs,
\label{2.61a}
\end{align}
where
\begin{align}
& \bt=( \rho^{-1}(k'_1),  \rho^{-1}(k'_2),  \cdots, \rho^{-1}(k'_{N'}))^T,~~~
\bs=(\rho(k_1), \rho(k_2), \cdots, \rho(k_{N}))^T, \label{2.61b}\\
& \bK'=\mathrm{Diag}\{k_1^{'},k_2^{'},\cdots, k_{N'}^{'}\}. \label{2.61c}
\end{align}
\end{subequations}
Obviously, from \eqref{LI} we also have
\begin{subequations}
    \begin{align}
            \phi ^{(\iota)}&=-\frac{1}{2\pi {\rm i}}
      \int_\mathbb{C} \frac{\phi\left( \mu \right) R\left( \mu ,\xi\right)}{ \xi^{\iota+1} }  d\mu \wedge d\bar{\mu },
      \quad \text{for} \quad \iota\leq-1, \\\phi_c^{\left(0\right)}&=1+\frac{1}{2\pi {\rm i}}\int _\mathbb{C}\int _\mathbb{C}\frac{\phi(\mu) R\left( \mu,\xi \right) }{\xi -c} {\rm d}\mu \wedge {\rm d}\overline{\mu }\wedge {\rm d}\xi \wedge {\rm d}\overline{\xi },\\
    \phi_c^{\left(\iota\right)}&=\frac{1}{2\pi {\rm i}}\int _\mathbb{C}\int _\mathbb{C}\frac{\phi(\mu) R\left( \mu,\xi \right) }{(\xi -c)^{\iota+1}} {\rm d}\mu \wedge {\rm d}\overline{\mu }\wedge {\rm d}\xi \wedge {\rm d}\overline{\xi },\  \ \text{for}   \ \ \iota \geq 1.
    \end{align}
\end{subequations}
Then, by a way similar to getting \eqref{2.61a}, we obtain
\begin{subequations}
\begin{align}
        \phi ^{(\iota)}&=-\bt^T \bK'^{-(\iota+1)}\bB \left(\bI_N+\bv_{N}\bM\bu^{-1}_{N'} \bB \right)^{-1}\bs.\ \ \text{for}  \  \  \iota\leq-1,\\
        \phi_c^{(0)}&=\phi(c)=1+\bt^T (\bK'-c\bI_{N'})^{-1} \bB \left(\bI_N+\bv_{N}\bM\bu^{-1}_{N'} \bB \right)^{-1}\bs, \label{phi-c0}\\
    \phi_c^{(\iota)}&=\bt^T (\bK'-c\bI_{N'})^{-(\iota+1)} \bB \left(\bI_N+\bv_{N}\bM\bu^{-1}_{N'} \bB \right)^{-1}\bs,\  \ \text{for}  \ \ \iota\geq1.
    \end{align}
\end{subequations}

Thus, we have obtained all the explicit formulae for $\phi(p)$ and the coefficients \textcolor{red}{$\phi_c^{\iota}$}.
We summarize solution formulations of the obtained lKP-type equations in Table \ref{tab:my_table}.
\begin{table}[H]
\begin{tabular}{|l|l|>{\raggedright\arraybackslash}p{0.3\linewidth}|}
\hline
{Equation}   & {Variable} &   {Solution} \\[0.7ex]
\hline
lKP \eqref{lattice-KP} & $f$ &     $\alpha-\zeta+\widetilde{\phi}^{(-1)}-\dot{\phi}^{(-1)}$  \\[0.7ex]
\hline
lpKP \eqref{dKP}& $u$&     $\phi^{(-1)}$  \\[0.7ex]
\hline
lmKP \eqref{dMKP}&$w$&   $ \phi_c^{(0)} $\\[0.7ex]
\hline
lSKP \eqref{dSKP}& $z_c$&   $\frac{\phi_c^{(1)}}{\phi_c^{(0)}}+\frac{n}{\alpha+c}+\frac{m}{\beta+c}+\frac{l}{\zeta+c}$\\[0.7ex]
\hline
NQC-KP \eqref{Zab}&$Z(a,b)$&    $\Xi$\\[0.7ex]
\hline
NQC-KP { \eqref{NQC-2}}&$\mathcal{S}(a,b)$&     $\frac{1}{b-a}\Big(\frac{\phi(b)}{\phi(a)}-1\Big)$\\[0.7ex]
\hline
\end{tabular}
\vspace{12pt}
\centering
\caption{Solutions of the lKP-type equations}

\label{tab:my_table}
\end{table}

\noindent
where
\begin{subequations}
\begin{eqnarray}
    \phi ^{(-1)}&=&-\bt^T \bB \left(\bI_N+\bv_{N}\bM\bu^{-1}_{N'} \bB \right)^{-1}\bs,\\
    \phi_c^{(0)}&=&1+\bt^T (\bK'-c\bI_{N'})^{-1} \bB \left(\bI_N+\bv_{N}\bM\bu^{-1}_{N'} \bB \right)^{-1}\bs, \\
    \phi_c^{(1)}&=&\bt^T (\bK'-c\bI_{N'})^{-2} \bB \left(\bI_N+\bv_{N}\bM\bu^{-1}_{N'} \bB \right)^{-1}\bs,\\
    \phi(p)&=&1+\bt^T (\bK'-p\bI_{N'})^{-1} \bB \left(\bI_N+\bv_{N}\bM\bu^{-1}_{N'} \bB \right)^{-1}\bs,\\
    \Xi&=&\frac{1}{b-a}\left(\theta^{-1}\left(\frac{\phi_a^{(1)}}{\phi_a^{(0)}}
    +\frac{n}{\alpha+a}+\frac{m}{\beta+a}+\frac{l}{\zeta+a}\right)-1\right),
    \end{eqnarray}
\end{subequations}
and $\theta=\theta(a,b)$ is given in \eqref{theta}.

\section{Inhomogeneous nonlocal $\bar{\partial}$ problem and related equations}\label{sec-3}

In this section, we will consider an inhomogeneous $\bar{\partial}$ problem,
of which the solution not only provides the eigenfunction of the Lax pair of the lmKP equation,
but also directly leads to a formulation of the NQC-KP equation.

\subsection{Inhomogeneous $\bar{\partial}$ problem, Lax pair and lmKP equation}\label{sec-3-1}

Let us consider an inhomogeneous $\bar{\partial}$ problem
\begin{subequations}\label{DP-2}
\begin{eqnarray}\label{Dp-d}
    \bar{\partial} \psi\left( p\right) =-2\pi{\rm i}\delta(p-b)+\int_C\psi \left( \mu\right) R\left(\mu, p\right){\rm d}\mu\wedge {\rm d}\bar{\mu}
\end{eqnarray}
with a special asymptotic behavior
\begin{equation}
    \psi \left( p\right) \sim 0, ~~\  \ (p\rightarrow \infty),
\end{equation}
\end{subequations}
where $p$ is the spectral parameter, $b\in \mathbb{C}$ is a parameter,
and $R(\mu,p)$ is  defined in \eqref{R} and \eqref{rho}.

This is different from the homogeneous $\bar{\partial}$ problem \eqref{DP-1}
because of  the additional inhomogeneous term, i.e.  $-2\pi{\rm i}\delta(p-b)$.
Through the Cauchy-Pompeiu integral formula \eqref{Cauchy}, the $\bar{\partial}$ problem \eqref{DP-2} is cast to
\begin{equation}\label{LI2}
    \psi \left( p\right) =\frac{1}{p-b}+\frac{1}{2\pi {\rm i}}\int _\mathbb{C}\int _\mathbb{C}\frac{\psi(\mu) R\left( \mu,\xi \right) }{\xi -p} {\rm d}\mu \wedge {\rm d}\overline{\mu }\wedge {\rm d}\xi
    \wedge {\rm d}\overline{\xi }.
\end{equation}
In the following analysis, we assume $\psi(p)$ has the following Laurent expansion at infinity:
\begin{equation}\label{exp-infty}
\psi \left( p\right) =\sum_{j=1}^{+\infty}b^{j-1}p^{-j}+
\sum_{j=1}^{+\infty} \psi^{(-j)} p^{-j},
\end{equation}
where the two series correspond to the two terms on the right-hand side of \eqref{LI2}, respectively.
Following the same procedure as in Sec.\ref{sec-2-2},
one can find that the function $\psi$ defined by \eqref{DP-2} satisfies the Lax triplet
\begin{subequations}\label{Lax-MKP}
\begin{align}
    (\beta+p)\widehat{\psi}(p)-(\alpha+p) g_1\widetilde{\psi}(p)
+(\alpha  g_1-\beta +  bg_1-b)\psi(p)&=0,\label{Lax-MKP-1}\\
(\zeta+p)\dot{\psi}(p)-(\alpha+p) g_2\widetilde{\psi}(p)
+(\alpha  g_2-\zeta +  bg_2-b)\psi(p)&=0,\label{Lax-MKP-2}\\
(\beta+p)\widehat{\psi}(p)-(\zeta+p) g_3\dot{\psi}(p)
+(\zeta g_3-\beta +  bg_3-b)\psi(p)&=0,\label{Lax-MKP-3}
\end{align}
\end{subequations}
where
\begin{eqnarray}\label{Define2}
    g_1=\frac{1+\widehat{\psi}^{(-1)}}{1+\widetilde{\psi}^{(-1)}},~~~
     g_2=\frac{1+\dot{\psi}^{(-1)}}{1+\widetilde{\psi}^{(-1)}},~~~
      g_3=\frac{1+\widehat{\psi}^{(-1)}}{1+\dot{\psi}^{(-1)}},
\end{eqnarray}
and note that they are connected by
\begin{equation}\label{g123}
\dot g_1=\widehat g_2\, \widetilde g_3.
\end{equation}
Here we provide the details for the derivation of \eqref{Lax-MKP-1}.
First, in light of \eqref{exp-infty}, we have
\begin{align}
 N(p)\psi(p)&:= (\beta+p)\widehat{\psi}(p)-(\alpha+p) g_1\widetilde{\psi}(p)
+(\alpha  g_1-\beta +  bg_1-b)\psi(p)\nonumber\\
&=(1+\widehat{\psi}^{(-1)})-g_1(1+\widetilde{\psi}^{(-1)})+O(p^{-1})\nonumber\\
&= 0+O(p^{-1}),~~~~~~~~~~~~~~~~~~~~~~~~~~~~~~~~~~~~~~~~~~~
\text{as}~~|p| \to \infty.
\label{Np}
\end{align}
Then we calculate
\begin{equation}
  \bar{\partial}\big(N(p)\psi(p)\big)
  = (\beta+p)\bar{\partial}\widehat{\psi}(p)-(\alpha+p) g_1\bar{\partial}\widetilde{\psi}(p)
+(\alpha  g_1-\beta +  bg_1-b)\bar{\partial}\psi(p).
\end{equation}
After inserting  the $\bar{\partial}$ equation \eqref{Dp-d} into the right-hand side,
we get a homogeneous  $\bar\partial$ problem
\begin{equation}
  \bar{\partial}\big(N(p)\psi(p)\big)
  =\int_C\big(N(\mu)\psi(\mu)\big) R\left(\mu, p\right){\rm d}\mu\wedge {\rm d}\bar{\mu}.
\end{equation}
Thus, in light of Theorem \ref{thm-1} and the asymptotic result \eqref{Np}, we have
$N(p)\psi(p)=0$, which means \eqref{Lax-MKP-1} holds.
In a similar way we can prove \eqref{Lax-MKP-2}.
The third relation \eqref{Lax-MKP-3} is the driect consequence of  \eqref{Lax-MKP-1} and \eqref{Lax-MKP-2}.

The set \textcolor{red}{of} equations \eqref{Lax-MKP} provide  a Lax triplet for the lmKP equation. In fact,
under the transformation
\begin{eqnarray}
    \varphi=(\alpha+p)^n(\beta+p)^m(\zeta+p)^l\psi(p),
\end{eqnarray}
they go to
\begin{subequations}
    \begin{align}
    \widehat{\varphi}(p)- g_1\widetilde{\varphi}(p)
+(\alpha  g_1-\beta +  b g_1-b)\varphi(p)&=0,\\
\dot{\varphi}(p)- g_2\widetilde{\varphi}(p)
+(\alpha  g_2-\zeta +  bg_2-b)\varphi(p)&=0,\\
\widehat{\varphi}(p)- g_3\dot{\varphi}(p)
+(\zeta g_3-\beta +  bg_3-b)\varphi(p)&=0.
    \end{align}
\end{subequations}
The compatibility conditions yield two equations
\begin{subequations}
\begin{align}
         & -\widehat{g}_2(\zeta \widetilde{g}_3-\beta +  b\widetilde{g}_3-b)+(\alpha  \dot{g}_1-\beta
         +  b \dot{g}_1-b)g_2-(\alpha  \widehat{g}_2-\zeta +  b\widehat{g}_2-b)g_1=0,\label{NCC_1}\\
         &(\alpha  \dot{g}_1-\beta +  b\dot{g}_1-b)(\alpha  g_2-\zeta +  bg_2-b)
         =(\alpha  \widehat{g}_2-\zeta +  b\widehat{g}_2-b)(\alpha  g_1-\beta +  b g_1-b), \label{NCC_2}
    \end{align}
\end{subequations}
which are the same in light of the relation \eqref{g123}.
Now we focus on \eqref{NCC_2}.
Introduce
\begin{eqnarray}\label{U}
    U:=1+\psi^{(-1)},
\end{eqnarray}
by which we write \eqref{Define2} as
\begin{eqnarray}\label{Define2-U}
    g_1=\frac{ \widehat{U}}{ \widetilde{U}},~~~
     g_2=\frac{\dot{U}}{\widetilde{U}},~~~
      g_3=\frac{\widehat{U}}{\dot{U}},
\end{eqnarray}
Inserting them into \eqref{NCC_2}, we  rewrite it in terms of $U$ as
\begin{align}
      &  \big((\alpha+b)\dot{\widehat U}-(\beta+b)\dot{\widetilde U}\big)
\big((\alpha+b)\dot U - (\zeta+b)\widetilde U\big)\,\widehat{\widetilde U}\nn \\
=~&
\big((\alpha+b)\dot{\widehat{U}}-(\zeta+b)\widehat{\widetilde U}\big)
\big((\alpha+b)\widehat U - (\beta+b)\widetilde U\big)\,\dot{\widetilde U},
    \end{align}
or equivalently,
\begin{eqnarray}\label{DMKP_U}
    (\alpha+b)\left( \frac{\widehat{U}}{\widehat{\widetilde{U}}}
       -\frac{\dot{U}}{\dot{\widetilde{U}}} \right)
+(\beta+b)\left( \frac{\dot{U}}{\dot{\widehat{U}}}
       -\frac{\widetilde{U}}{\widehat{\widetilde{U}}} \right)
+(\zeta+b)\left( \frac{\widetilde{U}}{\dot{\widetilde{U}}}
       -\frac{\widehat{U}}{\dot{\widehat{U}}} \right)
=0.
\end{eqnarray}
This is the lmKP equation with a parameter $b$, cf.\cite{HJN-book-2016}.

\subsection{The NQC-KP equation}\label{sec-3-2}

In the following we will show that the function $\psi(p)$ defined by the
inhomogeneous $\bar\partial$ problem \eqref{DP-2} satisfies the NQC-KP equation.
There are two parameters $a$ and $b$ in the NQC-KP equation (see \eqref{NQC-2}).
There is already the parameter $b$ in $\psi(p)$.
To introduce the parameter $a$, we assume that $a \in \mathbb{C}$ $ (a \neq b)$
is a finite point where $\psi(p)$ is analytic.
Thus we can  expand $\psi(p)$ in a neighborhood of $a$:
\begin{equation}\label{exp-a}
    \psi (p) =\sum_{\iota=0}^{+\infty}\psi_a^{(\iota)}(p-a)^\iota,
\end{equation}
where $\{\psi_a^{(\iota)}\}$ can be uniquely expressed as
\begin{subequations}
\begin{align}
            \psi_a^{\left(0\right)}&=\psi(a)=\frac{1}{a-b}+\frac{1}{2\pi {\rm i}}
            \int _\mathbb{C}\int _\mathbb{C}\frac{\psi(\mu) R\left( \mu,\xi \right) }{\xi -a}
            {\rm d}\mu \wedge {\rm d}\overline{\mu }\wedge {\rm d}\xi \wedge {\rm d}\overline{\xi },\\
    \psi_a^{\left(\iota\right)}&=-\frac{1}{(b-a)^{\iota+1}}+\frac{1}{2\pi {\rm i}}
    \int _\mathbb{C}\int _\mathbb{C}\frac{\psi(\mu) R\left( \mu,\xi \right) }{(\xi -a)^{\iota+1}}
     {\rm d}\mu \wedge {\rm d}\overline{\mu }\wedge {\rm d}\xi \wedge {\rm d}\overline{\xi },
     \  \ \text{for}   \ \ \iota \geq 1.
    \end{align}
\end{subequations}
Just taking $p=a$ in the triplet \eqref{Lax-MKP} yields
\begin{subequations}\label{Lax-MKP-a}
\begin{align}
    (\beta+a)\widehat{\psi}(a)-(\alpha+a) g_1\widetilde{\psi}(a)
+(\alpha  g_1-\beta +  bg_1-b)\psi(a)&=0,\label{Lax-MKP-a1}\\
(\zeta+a)\dot{\psi}(a)-(\alpha+a) g_2\widetilde{\psi}(a)
+(\alpha  g_2-\zeta +  bg_2-b)\psi(a)&=0,\label{Lax-MKP-a2}\\
(\beta+a)\widehat{\psi}(a)-(\zeta+a) g_3\dot{\psi}(a)
+(\zeta g_3-\beta +  bg_3-b)\psi(a)&=0.\label{Lax-MKP-a3}
\end{align}
\end{subequations}
Define
\begin{eqnarray}\label{SS-ab}
    {S}:={S}(a,b)=\psi(a).
\end{eqnarray}
Then, from \eqref{Lax-MKP-a} we obtain
\begin{equation}\label{g123-S}
g_1 = \frac{(\beta+b)S-(\beta+a)\widehat{S}}
{(\alpha+b)S-(\alpha+a)\widetilde{S}},~~
g_2 = \frac{(\zeta+b)S-(\zeta+a)\dot{S}}
{(\alpha+b)S-(\alpha+a)\widetilde{S}},~~
g_3 = \frac{(\beta+b)S-(\beta+a)\widehat{S}}
{(\zeta+b)S-(\zeta+a)\dot{S}}.
\end{equation}
Now, substituting the above expressions  into the relation \eqref{g123} yields a third formulation
for the NQC-KP equation:
 \begin{eqnarray}\label{Sab-2}
         \frac{(\beta+b)\widetilde{S}-(\beta+a)\widehat{\widetilde{S}}}
         {(\zeta+b)\widetilde{S}-(\zeta+a)\dot{\widetilde{S}}}
         \cdot\frac{(\zeta+b)\widehat{S}-(\zeta+a)\dot{\widehat{S}}}
         {(\alpha+b)\widehat{S}-(\alpha+a)\widehat{\widetilde{S}}}
         \cdot\frac{(\alpha+b)\dot{S}-(\alpha+a)\dot{\widetilde{S}}}
         {(\beta+b)\dot{S}-(\beta+a)\dot{\widehat{S}}}=1.
    \end{eqnarray}
Note that substituting \eqref{g123-S} into \eqref{NCC_2} yields a form
\begin{eqnarray}
\frac{(\beta+b)(\alpha+a)\,\dot{\widetilde{S}} - (\alpha+b)(\beta+a)\,\dot{\widehat{S}}}
{(\beta+b)(\alpha+a)\,\widetilde{S}- (\alpha+b)(\beta+a)\,\widehat{S}}
\cdot \frac{(\zeta+b)(\alpha+a)\,\widetilde{S} - (\alpha+b)(\zeta+a)\,\dot{S}}
{(\zeta+b)(\alpha+a)\,\widetilde{\widehat{S}} - (\alpha+b)(\zeta+a)\,\dot{\widehat{S}}}
\nonumber\\
\cdot
\frac{(\alpha+b)\,\widehat{S}-(\alpha+a)\,\widetilde{\widehat{S}}}
{(\alpha+b)\,\dot{S}-(\alpha+a)\,\dot{\widetilde{S}}}=1,
\end{eqnarray}
which is actually the same as \eqref{Sab-2} after some calculation.
Note also that combining \eqref{Define2-U} and \eqref{g123-S} together yields 
\begin{equation}
\frac{\widehat{U}}{\widetilde{U}}=
\frac{(\beta+b)S-(\beta+a)\widehat{S}}
{(\alpha+b)S-(\alpha+a)\widetilde{S}},~~~
\frac{\dot{U}}{\widetilde{U}}=
 \frac{(\zeta+b)S-(\zeta+a)\dot{S}}
 {(\alpha+b)S-(\alpha+a)\widetilde{S}},
\end{equation}
which provides a set of the B\"acklund transformations
between the lmKP equation \eqref{DMKP_U} and  the NQC-KP equation \eqref{Sab-2}.

\subsection{$N$-soliton solutions}\label{sec-3-3}

We derive the explicit forms of $\psi(p)$ and its coefficients in light of the choice \eqref{R0}.
Substituting  \eqref{R0} into \eqref{LI2} yields
\begin{eqnarray}
\psi( p)
&=&\frac{1}{p-b}+\sum_{j=1}^{N}\sum_{\ell=1}^{N'}\frac{1}{2\pi {\rm i}}
\frac{\psi(k_j)\rho_{j,\ell}^{(0)}\left(\frac{\alpha+k_j}{\alpha+k_{\ell}^{'}}\right)^n
\left(\frac{\beta+k_j}{\beta+k_{\ell}^{'}}\right)^m\left(\frac{\zeta+k_j}{\zeta+k_{\ell}^{'}}\right)^l
2\pi {\rm i}}
{k_{\ell}^{'}-p}\nonumber \\
&=&\frac{1}{p-b}+\sum_{j=1}^{N}\sum_{\ell=1}^{N'}
\frac{\psi (k_j) \rho(k_j)\rho^{-1}(k_{\ell}^{'})\rho_{j,\ell}^{(0)}}{k_{\ell}^{'}-p}.
\label{C2-NS_1}
\end{eqnarray}
Then, taking $p=k_1, k_2, \cdots, k_N$ in \eqref{C2-NS_1} and imposing the restriction $k_i\neq b$, we obtain an equation set
\begin{eqnarray}
    \psi \left(k_{i}\right)+\sum ^{N}_{j=1}\sum_{\ell=1}^{N'}\frac{\psi (k_j) \rho(k_j)\rho^{-1}(k_{\ell}^{'})\rho_{j,\ell}^{(0)}}{k_i-k_{\ell}^{'}}=\frac{1}{k_i-b},
    ~~~ (i=1,2,\cdots,N),
\end{eqnarray}
which can be rewritten as the matrix form
\begin{eqnarray}
       (\bI_N+\bM\bu^{-1}_{N'} \bB \bv_{N})\Psi=(\bK-b\bI_N)^{-1}\bi_N ,
\end{eqnarray}
where $\bM, \bu_{N'}, \bB, \bv_N, \bi_N$ are defined in \eqref{2.59b}, \eqref{2.59c} and \eqref{2.59d}, and
\begin{eqnarray}
\Psi = ( \psi(k_1), \psi(k_2),  \cdots, \psi(k_N))^{T}, ~~
\bK=\mathrm{Diag} \{ k_1, k_2, \cdots, k_N \}.
\end{eqnarray}
Thus we have
\begin{equation}\label{Psi}
\Psi=  (\bI_N+\bM\bu^{-1}_{N'} \bB \bv_{N})^{-1}(\bK-b\bI)^{-1}\bi_N.
\end{equation}
In addition, $\psi(p)$ given in \eqref{C2-NS_1} can be written as
\begin{align}
    \psi(p)=&\frac{1}{p-b}+\bi_{N'}^T (\bK'-p\bI_{N'})^{-1} \bu^{-1}_{N'} \bB \bv_{N}
    \left(\bI_N+\bM\bu^{-1}_{N'} \bB \bv_{N}\right)^{-1}(\bK-b\bI)^{-1}\bi_N\nn
\\
=&\frac{1}{p-b}+\bt^T(\bK'-p\bI_{N'})^{-1}\bB \left(\bI_N+\bv_{N}\bM\bu^{-1}_{N'} \bB \right)^{-1}
(\bK-b\bI)^{-1}\bs,
\end{align}
where $\bt, \bs, \bK'$ are defined in \eqref{2.61b} and \eqref{2.61c}.
Note that $\psi(p)$ has the Laurent expansion \eqref{exp-infty}.
It follows that
\begin{subequations}
\begin{align}
\psi ^{(\iota)}&=-\frac{1}{2\pi {\rm i}}
      \int_\mathbb{C} \int_\mathbb{C} \frac{\psi\left( \mu \right) R\left( \mu ,\xi\right)}{ \xi^{\iota+1} }  d\mu \wedge d\bar{\mu }\wedge {\rm d}\xi \wedge {\rm d}\overline{\xi },
      \quad \text{for} \quad \iota\leq-1, \\
      \psi_a^{\left(0\right)}&=\frac{1}{a-b}+\frac{1}{2\pi {\rm i}}
      \int _\mathbb{C}\int _\mathbb{C}\frac{\psi(\mu) R\left( \mu,\xi \right) }{\xi -a}
       {\rm d}\mu \wedge {\rm d}\overline{\mu }\wedge {\rm d}\xi \wedge {\rm d}\overline{\xi },\\
       \psi_a^{\left(\iota\right)}&=-\frac{1}{(b-a)^{\iota+1}}+\frac{1}{2\pi {\rm i}}
       \int _\mathbb{C}\int _\mathbb{C}\frac{\psi(\mu) R\left( \mu,\xi \right) }{(\xi -a)^{\iota+1}}
       {\rm d}\mu \wedge {\rm d}\overline{\mu }\wedge {\rm d}\xi \wedge {\rm d}\overline{\xi },
       \  \ \text{for}   \ \ \iota \geq 1.
    \end{align}
\end{subequations}
Similar to the treatment in Sec.\ref{sec-2-4}, we have 
\begin{subequations}
\begin{align}
        \psi ^{(\iota)}&=-\bt^T \bK'^{-(\iota+1)} \bB \left(\bI_N+\bv_{N}\bM\bu^{-1}_{N'}
        \bB \right)^{-1}(\bK-b\bI_N)^{-1}\bs\nn,\quad \text{for} \quad \iota\leq-1,\\
         \psi_a^{(0)}&=\frac{1}{a-b}+\bt^T (\bK'-a\bI_{N'})^{-1} \bB \left(\bI_N+\bv_{N}\bM\bu^{-1}_{N'}
         \bB \right)^{-1}(\bK-b\bI_N)^{-1}\bs\nn, \label{phi-a0}\\
    \psi_a^{(\iota)}&=-\frac{1}{(b-a)^{\iota+1}}+\bt^T (\bK'-a\bI_{N'})^{-(\iota+1)}
    \bB \left(\bI_N+\bv_{N}\bM\bu^{-1}_{N'} \bB \right)^{-1}(\bK-b\bI_N)^{-1}\bs\nn,
    \  \ \text{for}  \ \ \iota\geq1.
    \end{align}
\end{subequations}
In Table \ref{tab:my_table2}, we collect in the $N$-soliton solution formulations of the lmKP equation
and the NQC-KP equation derived in this section.

\begin{table}[H]
\begin{tabular}{|l|l|>{\raggedright\arraybackslash}p{0.2\linewidth}|}
\hline
{Equation}   & {Variable} &   {Solution} \\[0.2ex]
\hline
lmKP \eqref{DMKP_U}&$U$&   $1+\psi^{(-1)}$\\[0.2ex]
\hline
NQC-KP \eqref{Sab-2}&${S}(a,b)$&     $\psi(a)$\\[0.2ex]
\hline
\end{tabular}
\vspace{12pt}
\centering
\caption{ {Solutions} of the lmKP  equation and NQC-KP equation}
\label{tab:my_table2}
\end{table}
\noindent
where
\begin{subequations}
\begin{align}
   &  \psi ^{(-1)}= -\bt^T \bB \left(\bI_N+\bv_{N}\bM\bu^{-1}_{N'} \bB \right)^{-1}(\bK-b\bI_N)^{-1} \bs,  \\
   & \psi(a)= \frac{1}{a-b}+\bt^T (\bK'-a\bI_N)^{-1} \bB \left(\bI_N+\bv_{N}\bM\bu^{-1}_{N'} \bB \right)^{-1}
    (\bK-b\bI_N)^{-1}\bs. \label{psi-a}
\end{align}
\end{subequations}

\subsection{Comparison of different formulations}\label{sec-3-4}

In this subsection we briefly compare the solutions from different formulations.
Let us recall Chapter 9.7 in \cite{HJN-book-2016}, where the lKP-type equations are derived
from the Cauchy matrix approach.
One can see that the variable $U_{i,j}$ defined in equation (9.91) in \cite{HJN-book-2016}
corresponds to
\[ \bt^T {\bK'}^{j}\bB \left(\bI_N+\bv_{N}\bM\bu^{-1}_{N'} \bB \right)^{-1}\bK^i\bs\]
in our paper, $S(a,b)$ defined in equation (9.107) in \cite{HJN-book-2016}
corresponds to $\psi(a)$ (see \eqref{psi-a}) in our paper.
Thus, our formulation for the lpKP equation (see $u$ in \eqref{u-phi}) and
our formulation for the NQC-KP equation \eqref{Sab-2} (see ${S}(a,b)$ in \eqref{SS-ab})
are essentially the same as those in \cite{HJN-book-2016}, respectively.

For the lmKP equation given in \eqref{dMKP} and in \eqref{DMKP_U},
the variables $w$ and $U$ correspond to the variables $V(a)$ and $Y(a)$ (see (9.102) in \cite{HJN-book-2016}).
So, they have same formulations.

For the lSKP equation \eqref{dSKP} where $z_c$ is defined in \eqref{z_c-phi}, it is different from
the known formulation (cf. (9.99) in \cite{HJN-book-2016}).

For the NQC-KP equation, we have given three different formulations.
The first one is \eqref{Zab} where $Z(a,b)$ is formulated in \eqref{Z-ab};
the second one is \eqref{NQC-2} where $\mathcal{S}(a,b)$ is formulated in \eqref{S-ab};
the third one is \eqref{Sab-2} where  ${S}(a,b)$ is formulated in \eqref{SS-ab}.
$Z(a,b)$ and  $\mathcal{S}(a,b)$ are apparently different by comparing the definitions of $z_a \theta^{-1}(a,b)$ and
$X(a,b)$.
In addition, from the expressions of $\phi(a), \phi(b)$ and $\psi(a)$
one can also easily see that $\mathcal{S}(a,b)$ and $S(a,b)$ are different.
Thus, equations  \eqref{Zab} and \eqref{NQC-2} are two new formulations for the NQC-KP equation,
or in other words, $Z(a,b)$ and  $\mathcal{S}(a,b)$ provide new solutions to the NQC-KP equation.

\section{Conclusion}\label{sec-4}

In this paper we have explored the eigenfunction equations of the lKP type,
by utilizing $\bar{\partial}$ problems.
We started from the homogeneous $\bar{\partial}$ problem \eqref{DP-1},
which we showed  admits a unique solution with the help of the Cauchy-Pompeiu integral formula.
It follows that this $\bar{\partial}$ problem defines the eigenfunction of the Lax pair
for the lKP equation as well as the lpKP equation.
We then proved that the eigenfunction formulates the lmKP equation.
In addition, the variable $\frac{\phi_c^{(1)}}{\phi_c^{(0)}}$ leads to the lSKP equation and
the NQC-KP equation.
We also showed that $\frac{\phi(b)}{\phi(a)}$ generates a second formulation for the NQC-KP equation.
Besides \eqref{DP-1}, we also considered an inhomogeneous $\bar{\partial}$ problem \eqref{DP-2},
which defines the eigenfunction of the Lax pair for the lmKP equation.
It turns out that such an eigenfunction directly {provides} a third formulation of the NQC-KP equation.

This paper, together with our previous papers \cite{SZZ-MPAG-2025} and \cite{SZZ-PA-2026}, compose a ``trilogy'' of studying discrete eigenfunction equations by utilizing $\bar{\partial}$ problems.
One can also see a hierarchy of eigenfunction equations.
Specifically, the eigenfunction of the lKP (or lpKP) equation defines the lmKP equation,
while the eigenfunction of the lmKP  equation defines the NQC-KP equation.
Such a hierarchy also exists in the 2-dimensional case \cite{SZZ-MPAG-2025,SZZ-PA-2026}.
In addition, the correspondences between the Cauchy matrix
variable $U_{i,j}$ (see equation (9.91) in \cite{HJN-book-2016})
and $\phi(p), \psi(p)$ and the coefficients in their expansions
were also discussed.

In this paper and \cite{SZZ-MPAG-2025,SZZ-PA-2026}, the formulations of solutions are restricted
to solitons, which correspond to discrete exponential functions as plane wave factors (e.g. \eqref{rho}).
There are also elliptic solitons that are generated by the Lam\'e functions
\cite{LSZ-N-2025,NA-IMRN-2010,NSZ-CMP-2023}.
The Dbar formulation for elliptic solitons will be the next topic for investigation.

\vskip 20pt

\subsection*{Acknowledgments}
This project is supported by the NSFC grant (Nos. 1241540016, 12271334 and 12171306)
and the La Trobe University China studies seed-funding research grant.


\vskip 20pt

\begin{thebibliography}{99}

\bibitem{ABF-SAPM-1983} M.J. Ablowitz, D. Bar Yaacov, A.S. Fokas,
       On the inverse scattering transform for the Kadomtsev-Petviashvili equation,
       Stud. Appl. Math., 69 (1983) 135-143.

\bibitem{AF-book-2021} M.J. Ablowitz, A.S. Fokas,
         Introduction to Complex Variables and Applications,
         Camb. Univ. Press, Cambridge, 2021.

\bibitem{ABS-CMP-2003} V.E. Adler, A.I. Bobenko, Yu.B. Suris,
        Classification of integrable equations on quad-graphs. The consistency approach,
        Commun. Math. Phys.,   233  (2003) 513-543.

\bibitem{BC-1980-81} R. Beals, R.R. Coifman,
       Scattering, transformations spectrales et \'equations d$^{\prime}$\'evolution non lin\'eaires,
       S\'eminaire Goulaouic-Schwartz (1980-1981), talk no.22 (9pp).

\bibitem{BC-1981-82} R. Beals, R.R. Coifman,
       Scattering, transformations spectrales et \'equations d$^{\prime}$\'evolution non lin\'eaire  II,
       S\'eminaire Goulaouic-Schwartz (1981-1982), talk no.21 (8pp).

\bibitem{BC-CAPM-1984} R. Beals, R.R. Coifman,
        Scattering and inverse scattering for first order systems,
        Commun. Pure  Appl. Math., 37 (1984) 39-90.

\bibitem{BC-PD-1986} R. Beals, R.R. Coifman,
        The D-bar approach to inverse scattering and nonlinear evolutions,
        Physica D, 18 (1986) 242-249.

\bibitem{BC-IP-1989} R. Beals, R.R. Coifman,
        Linear spectral problems, non-linear equations and the $\bar{\partial}$-method,
        Inverse Probl., 5 (1989) 87-130.

\bibitem{FA-PRL-1981} A.S. Fokas, M.J. Ablowitz,
          Linearization of the Korteweg-de Vries and Painlev{\'e} II equations,
          Phys. Rev. Lett,   47  (1981) 1096-1100.

\bibitem{FA-JMP-1984} A.S. Fokas, M.J. Ablowitz,
        On the inverse scattering transform of multidimensional nonlinear equations related to
        first-order systems in the plane,
        J. Math. Phys., 25 (1984) 2494-2505.

\bibitem{F-PhD-2018} W. Fu,
        Direct Linearisation of Discrete and Continuous Integrable Systems: The KP Hierarchy and its Reductions,
        PhD Thesis, the University of Leeds, 2018.

\bibitem{H-2012} G. Henkin,
        Cauchy-Pompeiu type formulas for $\bar{\partial}$ on affine algebraic Riemann surfaces and some applications,
        in:  Perspectives in Analysis, Geometry and Topology,
        Eds. I. Itenberg, B. J\"oricke,  M. Passare,
        Birkh\"auser, New York, 2012, pp. 213-236.

\bibitem{HJN-book-2016} J. Hietarinta, N. Joshi, F.W. Nijhoff,
     Discrete Systems and Integrability,
     Cambridge University Press, Cambridge, 2016.

\bibitem{Hir-1974}	
         R. Hirota,
         A new form of B\"acklund transformations and its relation to the inverse scattering problem,
         Prog. Theo. Phys., 52 (1974) 1498-1512.

\bibitem{JM-IP-1986} M. Jaulent, M. Manna,
      Connection between the KdV and the AKNS spatial transforms,
      Inverse Probl., 2 (1986) L35-L41.

\bibitem{JM-EPL-1986} M. Jaulent, M. Manna,
      The spatial transform method for multidimensional $(2+1)$ problems,
      Europhys. Lett., 2 (1986) 891-896.

\bibitem{JM-IP-1987} M. Jaulent, M. Manna,
      The spatial transform method for $\bar{\partial}$ equations of $n$th linear order,
      Inverse Probl., 3 (1987) L13-L18.

\bibitem{JM-JMP-1987} M. Jaulent, M. Manna,
      The ``spectral Wronskian'' tool and the $\bar{\partial}$ investigation of the KdV hierarchy,
      J. Math. Phys., 28 (1987) 2338-2342.

\bibitem{JMM-IP-1988} M. Jaulent, M. Manna, L. Martinez Alonso,
       $\bar{\partial}$ equations in the theory of integrable systems,
       Inverse Probl., 4 (1988) 123-150.

\bibitem{K-RMP-1990} B.G. Konopelchenko,
         Soliton eigenfunction equations: the IST integrability and some properties,
         Rev. Math. Phys., 2 (1990) 399-440.

\bibitem{LSZ-N-2025}	X. Li, Y.Y. Sun, D.J. Zhang,
       The direct linearization scheme with Lam\'e function: The KP equation and reductions,
       Nonlinearity 38  (2025) 105024 (30pp).

\bibitem{MiwJD-2000} 
         T. Miwa, M. Jimbo, E. Date,
         Solitons: Differential Equations, Symmetries and Infinite Dimensional Algebras,
         Camb. Univ. Press, Cambridge, 2000.

\bibitem{NA-IMRN-2010} F.W. Nijhoff, J. Atkinson,
       Elliptic $N$-soliton solutions of ABS lattice equations,
       Int. Math. Res. Not., 2010 (2010) 3837-3895.


\bibitem{NCWQ-PLA-1984} F.W. Nijhoff, H. Capel, G. Wiersma, R. Quispel,
        B\"acklund transformations and three-dimensional lattice equations,
         Phys. Lett. A, 105 (1984) 267-272.


\bibitem{NSZ-CMP-2023} F.W. Nijhoff, Y.Y. Sun, D.J. Zhang,
         Elliptic solutions of Boussinesq type lattice equations and the elliptic $N$th  root of unity,
         Commun. Math. Phys., 399  (2023) 599-650.

\bibitem{N-JPA-2006} J.J.C. Nimmo,
         On a non-Abelian Hirota-Miwa equation,
         J. Phys. A: Math. Gen., 39 (2006), 5053-5065.


\bibitem{P-1909} D. Pompeiu,
      Sur la repr\'esentation des fonctions analytiques par des int\'egrales d\'efinies,
      C. R. Acad. Sc. Paris, 149 (1909) 1355-1357.

\bibitem{SZZ-MPAG-2025} L.L. Shi, C. Zhang, D.J. Zhang,
      On the $\bar{\partial}$ method and direct linearization approach of the lattice KdV type equations,
      Math. Phys. Anal. Geom., 28 (2025) 40 (32pp).

\bibitem{SZZ-PA-2026} L.L. Shi, C. Zhang, D.J. Zhang,
      On the formulation of the NQC variable,
      Physica A: Stat. Mech. Appl., 685 (2026) 131317 (15pp).

\bibitem{WZZZ-JPA-2024} X.Y. Wu, C. Zhang, D.J. Zhang, H.F. Zhang,
        Lattice eigenfunction equations of KdV-type,
        J. Phys. A: Math. Theor.,  57  (2024) 255202 (25pp).

\bibitem{ZM-FAA-1985} V.E. Zakharov, S.V. Manakov,
        Construction of higher-dimensional nonlinear integrable systems and of their solutions,
        Funct. Anal. Appl.,  19  (1985) 89-101.



\end{thebibliography}
\end{document}